\newlist{mylist}{enumerate*}{1}
\setlist[mylist]{label=(\roman*)}
\def\@clipped@vdash{%
  \raise .6ex\hbox{\clipbox{0pt .6ex 0pt .6ex}{$\vdash$}}%
}
\newcommand*\vDdashA{%
  \mathrel{%
    \ooalign{%
      $\vdash$\cr
      \raise  .3ex\hbox{\@clipped@vdash}\cr
      \raise -.3ex\hbox{\@clipped@vdash}%
    }%
  }%
}
\DeclareMathOperator*{\Max}{Max}
\DeclareMathOperator*{\Min}{Min}
\newcommand{\Pcat}{\mathsf{Pt}}
\newcommand{\setcat}{\mathsf{Set}}
\newtheorem{definition}{Definition}
\newtheorem{example}{Example}
\newtheorem{lemma}{Lemma}
 \newcommand{\subseteqw}{\sqsubseteq}
\title{Paraconsistent Transition Systems\thanks{This work is supported by by FCT, the Portuguese funding agency for Science and Technology with the projects UIDB/04106/2020 and PTDC/CCI-COM/4280/2021}}
\author{Ana Cruz
\institute{INESC TEC, University of Minho, Portugal}
\and
Alexandre Madeira 
\institute{CIDMA, University of Aveiro, Portugal}
\and
Luís S. Barbosa
\institute{INESC TEC, University of Minho, Portugal}
}
\begin{document}
\maketitle
\begin{abstract}
Often in Software Engineering a modelling formalism has to support scenarios of inconsistency in which several requirements either reinforce or contradict each other. Paraconsistent transition systems are proposed in this paper as  one such formalism: states evolve through two  accessibility relations capturing weighted evidence of a transition or its absence, respectively.  Their weights come from
a specific residuated lattice. A category of these systems, and the corresponding algebra, is defined providing a formal setting to model different application scenarios. One of them, dealing with the effect of quantum decoherence in quantum programs, is used for illustration purposes.

\end{abstract}

\section{Introduction} \label{sc:intro}

Dealing with application scenarios where requirements either reinforce or contradict each other is not uncommon in  Software Engineering. One such scenarios comes from current practice in quantum computation in the context of  NISQ (\emph{Noisy Intermediate-Scale Quantum}) technology \cite{Preskill18} in which  levels of decoherence of quantum memory need to be articulated with the length of the circuits to assess program quality.

In a recent paper \cite{CMB22}, the authors introduced a new kind of weighted transitions systems which records, for each transition, a positive and negative weight which, informally, capture the degree of effectiveness (\emph{`presence'}) and of impossibility (\emph{`absence'}) of a transition. This allows the model to capture both \emph{vagueness}, whenever both weights sum less than 1, as usual e.g. in fuzzy systems, and \emph{inconsitency}, when their sum exceeds 1.
This last feature motivates the qualifier \emph{paraconsistent} borrowed from the work on paraconsistent logic \cite{js123,CKB07}, which
accommodates inconsistency in a controlled way, treating inconsistent information as potentially informative. Such logics were 
originally developed in Latin America in the decades of 1950 and 1960, mainly by F. Asenjo and Newton da Costa. Quickly, however,
the topic attracted attention in the international community  and the original scope of mathematical applications broadened  out, as witnessed in a recent book  emphasizing the engineering potential of paraconsistency \cite{Aka16}. In particular, a number of applications to themes from quantum mechanics and quantum information theory have been studied by D. Chiara \cite{CG00} and W. Carnielli and his collaborators \cite{AC10,CK14}. 

This paper continues such a research program in two directions. First it introduces a suitable notion of morphism for paraconsistent labelled transition systems (PLTS)  leading to the definition of the corresponding category and its algebra. Notions of simulation, bisimulation and trace for PLTS are also discussed. On a second direction, the paper discusses an application of PLTS to reason about the effect of quantum decoherence in quantum programs.

\paragraph{Paper structure.} After recalling the concept of a PLTS and defining their morphisms in  section \ref{sc:def}, section \ref{sc:mor} discusses suitable notions of simulation, bisimulation and trace. Compositional construction of (pointed) PLTS are characterised in section \ref{sc:new} by exploring the relevant category, following G. Winskel  and M. Nielsen's `recipe' \cite{WN95}. Section \ref{sc:qu} illustrates their use to express quantum circuits with decoherence. Finally, section \ref{sc:conc} concludes and points out a number of future research directions.

\section{Paraconsistent labelled transition systems}\label{sc:def}
A \emph{paraconsistent labelled transition system} (PLTS) incorporates two  accessibility relations, classified as positive and negative, respectively, which characterise each transition in opposite ways: one represents the evidence of its presence and other the evidence of its absence.  Both relations are weighted by elements of a residuated lattice
 $\Sigma = \langle \wedge,\vee,\odot ,\to,\text{1},\text{0}\rangle$, where,
$\langle A, \wedge,\vee,\text{1},\text{0}\rangle$ is a lattice,
$\langle A,\odot, 1\rangle$ is a monoid, 
and operation $\odot$ is residuated, with $\to$, i.e. for all $a,b,c\in A$,
 $a\odot b\leq c \Leftrightarrow b\leq a\to c$.
 A G\"odel algebra $G=\langle [0,1], min, max, min, \to,0,1\rangle$ is an example of such a structure, that will be used in the sequel. Operators $max$ and $min$ retain the usual definitions, whereas implication is given by
 $$
 a \to b = \begin{cases} 1, \text{ if } a \leq b \\ b, \text{ otherwise } \end{cases} .
 $$
 Our constructions, however, are, to a large extent, independent of the particular residuated lattice chosen. The definition below extends the one in reference \cite{CMB22} to consider labels in an explicit way. Thus,

\begin{definition}\label{plts} A \textbf{paraconsistent labelled transition system} (PLTS)  over a residuted lattice $A$ and a set of atomic actions $\Pi$ is a structure $\langle W,R,\Pi \rangle$ where,
$W$ is a non-empty set of states, $\Pi$ is a set of labels, and 
$R \subseteq W\times\Pi\times W\times A \times A$ characterises its dynamics, subjected to the following condition: 
between two arbitrary states there is at most one transition involving label $a$,
for every $a \in \Pi$. Each tuple $(w_1,a,w_2,\alpha,\beta)\in R$ represents a transition from $w_1$ to $w_2$ labelled by $(a,\alpha, \beta)$, where $\alpha$ is the degree to which the action $a$ contributes to a transition from $w_1$ to $w_2$, and $\beta$, dually, expresses the degree to which it prevents its occurrence.
\end{definition}

The condition imposed in the definition above makes it possible to express relation $R$ in terms of a \emph{positive} and a \emph{negative} accessibility relation
$r^+, r^- :\Pi \longrightarrow A^{W \times W}$, with
$$r^+(\pi)(w, w')=\begin{cases}  \alpha \text{ if }(w,\pi,w',\alpha,\beta)\in R \\ 0 \text{ otherwise } \end{cases}$$
and $r^-$ defined similarly.
These two relations jointly express different behaviours associated to a transition:
\begin{itemize}[noitemsep]
\item \emph{inconsistency}, when the positive and negative weights  are contradictory, i.e. they sum to some value greater then $1$; this corresponds to the upper triangle in the picture below, filled in grey.
\item \emph{vagueness}, when the sum is less than $1$,  corresponding to the lower, periwinkle  triangle in the same picture;
\item \emph{consistency}, when the sum is exactly $1$, which means that the measures of the factors enforcing or preventing a transition are complementary, corresponding to the red line in the picture.
\end{itemize}
\begin{center}
\scalebox{.7}{
\begin{tikzpicture}
\draw[gray, ultra thick] (0,0) rectangle (3,3);
\draw[black!40!green, ultra thick] (-0.3,0) -- (-0.3,3) ;
\draw[blue, ultra thick] (0,-0.3) -- (3, -0.3);
\draw[] (-1,3.3) node[anchor=east, rotate=90] {Transition is present};
\draw[] (1.5,-0.9) node[anchor=north] {Transition is absent};
\filldraw[blue] (0,-.3) circle (2pt) node[anchor=north] {$0$};
\filldraw[blue] (3,-0.3) circle (2pt) node[anchor=north] {$1$};
\filldraw[black!40!green] (-0.3,0) circle (2pt) node[anchor=east] {$0$};
\filldraw[black!40!green] (-0.3,3) circle (2pt) node[anchor=east] {$1$};
\filldraw[gray] (3,0) -- (3,3) -- (0,3);
\filldraw[white!60!blue] (0,0) -- (0,3) -- (3,0);
\draw[red, ultra thick] (0,3) -- (3,0);
\end{tikzpicture}
}
\end{center}

Morphisms between PLTS respect, as one would expect, the structure of both accessibility relations. Formally,

\begin{definition} \label{morph} 
Let $T_1=\langle W_1,R_1, \Pi \rangle$, $T_2=\langle W_2, R_2, \Pi \rangle$ be two PLTSs defined over the same set of actions $\Pi$. 
 A \textbf{morphism} from $T_1$ to $T_2$ is a function $h:W_1\rightarrow W_2$ such that 
 $$
 \forall_{a \in\Pi},\;  {r_1^+}(a)(w_1,w_2)\leq {r_2^+}(a)(hw_1,hw_2)\;  \text{and}\;  {r_1^-}(a)(w_1,w_2)\geq {r_2^-}(a)(hw_1,hw_2)
 $$
\end{definition}

\begin{example} Function $h=\{w_1\mapsto v_1,w_2\mapsto v_2,w_3\mapsto v_3\}$ is a morphism from 
$M_1$ to $M_2$, over $\Pi=\{a,b,c,d\}$, depicted below

\begin{center}
\begin{tabular}{cc}
\begin{tikzpicture}[->]
	\node[state] (w_1) {$w_1$};
	\node[state] (w_2) [below = of w_1] {$w_2$};
	\node[state] (w_3) [right = of w_2] {$w_3$};
	\node[state] (w_4) [below = of w_3] {$w_4$};
	\draw (w_1) edge[left] node {$(a,0.7,0.2)$} (w_2);
	\draw (w_2) edge[bend left, above] node {$(b,0.3,0.5)$} (w_3);
	\draw (w_3) edge[bend left, below] node {$(c,0.2,0.3)$} (w_2);
	\draw (w_3) edge[right] node {$(d,0.5,0.8)$} (w_4);
\end{tikzpicture}
\qquad &
\begin{tikzpicture}[->]
	\node[state] (v_1) {$v_1$};
	\node[state] (v_2) [below = of v_1] {$v_2$};
	\node[state] (v_3) [right = of v_2] {$v_3$};
	\node[state] (v_4) [below = of v_3] {$v_4$};
	\node[state] (v_5) [above = of v_3] {$v_5$};
	\draw (v_1) edge[left] node {$(a,0.9,0.1)$} (v_2);
	\draw (v_2) edge[bend left, above] node {$(b,0.5,0.2)$} (v_3);
	\draw (v_3) edge[bend left, below] node {$(c,0.6,0.1)$} (v_2);
	\draw (v_3) edge[right] node {$(c,0.8,0.4)$} (v_4);
	\draw (v_3) edge[right] node {$(a,0.4,0.7)$} (v_5);
\end{tikzpicture}
\end{tabular}
\end{center}
\end{example}

\section{Simulation and Bisimulation for PLTS}\label{sc:mor}

Clearly, PLTSs and their morphisms form a category, with composition and identities borrowed from $\mathsf{Set}$. To compare PLTSs is also useful to define what simulation and bisimulation mean in this setting. Thus, under the same assumptions on $T_1$ and $T_2$, 

\begin{definition}\label{simulation} 
A relation $S \subseteq W_1 \times W_2$ is a \textbf{simulation} provided that, for all $\langle p,q \rangle \in S$ and $a\in\Pi$,
\begin{equation*}
p\xrightarrow{(a,\alpha,\beta)\text{ }}_{T_1}p' \Rightarrow \langle \exists_{q'\in W_2}. \, \exists_{\gamma,\delta\in[0,1]} .\; q  \xrightarrow{(a,\text{ }\gamma,\text{ } \delta)\text{ }}_{T_2}q' \; \wedge \; \langle p',q'\rangle\in S \; \wedge\;  \gamma \geq \alpha \;\wedge \; \delta \leq \beta\rangle
\end{equation*}
which can be abbreviated to
\begin{equation*}
p\xrightarrow{(a,\alpha,\beta)\text{ }}_{T_1}p' \Rightarrow \langle \exists_{q'\in W_2} . \; q  \xrightarrow{(a,\text{ }\gamma: \text{ } \gamma \geq \alpha\text{ },\text{ } \delta:\text{ }\delta\leq\beta)\text{ }}_{T_2}q' \; \wedge\;  \langle p',q'\rangle\in S\rangle
\end{equation*}
Two states $p$ and $q$ are \textbf{similar}, written $p \lesssim q$, if there is a simulation $S$ such that $\langle p,q \rangle \in S$. 
\end{definition}
Whenever one restricts in the definition above to the existence of  values $\gamma$ (resp. $\delta$) such that $\gamma \geq \alpha$
(resp. $\delta \leq \beta$), the corresponding simulation is called \emph{positive} (resp. \emph{negative}).

\begin{example}\label{simeg} In the PLTSs depicted  below, $w_1\lesssim v_1$, witnessed by  
$$S=\{ \langle w_1,v_1\rangle,\langle w_2,v_2\rangle, \langle w_3,v_2\rangle, \langle w_4,v_3 \rangle, \langle w_5,v_4\rangle\}$$
\begin{center}
\scalebox{.7}{
\begin{tikzpicture}[->]
	\node[state] (w_1) {$w_1$};
	\node[state] (w_2) [right = of w_1] {$w_2$};
	\node[state] (w_3) [below = of w_2] {$w_3$};
	\node[state] (w_4) [right = of w_2] {$w_4$};
	\node[state] (w_5) [right = of w_3] {$w_5$};
	\draw (w_1) edge[above] node[above=0.5] {$(a,0.4,0.7)$} (w_2);
	\draw (w_1) edge[above] node[below,rotate=-45] {$(a,0.3,0.6)$} (w_3);
	\draw (w_2) edge[above] node[above=0.5] {$(b,0.2,0.8)$} (w_4);
	\draw (w_3) edge[above] node[below=0.5] {$(c,0.2,0.9)$} (w_5);
\end{tikzpicture}
\qquad
\begin{tikzpicture}[->]
	\node[state] (v_1) {$v_1$};
	\node[state] (v_2) [right = of v_1] {$v_2$};
	\node[state] (v_3) [right = of v_2] {$v_3$};
	\node[state] (v_4) [below = of v_3] {$v_4$};
	\draw (v_1) edge[above]  node[above=0.5] {$(a,0.5,0.5)$} (v_2);
	\draw (v_2) edge[above] node[above=0.5] {$(b,0.3,0.5)$} (v_3);
	\draw (v_2) edge[above] node[below, rotate=-45] {$(c,0.5,0.5)$} (v_4);
\end{tikzpicture}
}
\end{center}
\end{example}

\noindent
Finally, 
\begin{definition} \label{bisimulation}
A relation $B \subseteq W_1 \times W_2$ is a \textbf{bisimulation} if   for  $\langle p,q\rangle\in B$ and $a \in \Pi$
\begin{align*}
p\xrightarrow{(a,\alpha,\beta)\text{ }}_{M_1}p' & \Rightarrow \langle \exists q'\in W_2:q  \xrightarrow{(a,\alpha,\beta)\text{ }}_{M_2}q' \wedge \langle p',q'\rangle\in B\rangle\\
q\xrightarrow{(a,\alpha,\beta)\text{ }}_{M_2}q'  &\Rightarrow \langle \exists p'\in W_1:p  \xrightarrow{(a,\alpha,\beta)\text{ }}_{M_1}p' \wedge \langle p',q'\rangle\in B\rangle
\end{align*}

Two states $p$ and $q$ are \textbf{bisimilar}, written $p\sim q$, if there is a bisimulation $B$ such that $\langle p,q\rangle \in B$.
\end{definition} 

\begin{example}
Consider the two PLTSs depicted below. Clearly, $w_1 \sim v_1$.
\begin{center}
\scalebox{.7}{
\begin{tikzpicture}[->]
	\node[state] (w_1) {$w_1$};
	\node[state] (w_2) [below = of w_1] {$w_2$};
	\node[state] (w_3) [below= of w_1, right = of w_2] {$w_3$};
	\draw (w_1) edge[left] node[]{$(a,0.5,0.3)$} (w_2);
	\draw (w_1) edge node[rotate=-45, above]{$(a,0.7,0.2)$} (w_3);
	\draw (w_2) edge node[below=0.4]{$(c,0.2,0.3)$} (w_3);
	\draw (w_3) edge[loop right] node{$(c,0.4,0.5)$} (w_3);
	\draw (w_2) edge[loop left] node{$(c,0.4,0.5)$} (w_2);
\end{tikzpicture}
\qquad
\begin{tikzpicture}[->]
	\node[state] (v_1) {$v_1$};
	\node[state] (v_2) [below = of v_1] {$v_2$};
	\draw (v_1) edge node[right=0.3] {$(a,0.7,0.2)$} (v_2);
	\draw (v_2) edge[loop right] node{$(c,0.4,0.5)$} (v_2);
\end{tikzpicture}
}
\end{center}
\end{example}

\begin{lemma}  
Similarity, $\lesssim$, and  bisimilarity, $\sim$, form a preorder and an equivalence relation, respectively.
\end{lemma}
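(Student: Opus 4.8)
The plan is to isolate three closure facts about simulations and bisimulations and then read the statement off them. First I would observe that the diagonal relation $\Delta_W=\{\langle p,p\rangle \mid p\in W\}$ on the state space of any PLTS is simultaneously a simulation and a bisimulation: given $\langle p,p\rangle\in\Delta_W$ and a transition $p\xrightarrow{(a,\alpha,\beta)}p'$, the very same transition is a matching move, since taking $\gamma=\alpha$ and $\delta=\beta$ trivially satisfies $\gamma\geq\alpha$, $\delta\leq\beta$, and $\langle p',p'\rangle\in\Delta_W$; the second clause of Definition~\ref{bisimulation} holds by the same argument read backwards. This already yields reflexivity of both $\lesssim$ and $\sim$.

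Second, I would show that simulations are closed under relational composition: if $S_1\subseteq W_1\times W_2$ and $S_2\subseteq W_2\times W_3$ are simulations, then so is $S_1;S_2=\{\langle p,r\rangle \mid \exists q.\,\langle p,q\rangle\in S_1 \wedge \langle q,r\rangle\in S_2\}$. This is a diagram chase: from $\langle p,r\rangle\in S_1;S_2$ fix the mediating $q$; a move $p\xrightarrow{(a,\alpha,\beta)}p'$ is matched in $T_2$ by some $q\xrightarrow{(a,\gamma,\delta)}q'$ with $\gamma\geq\alpha$, $\delta\leq\beta$ and $\langle p',q'\rangle\in S_1$, which is in turn matched in $T_3$ by some $r\xrightarrow{(a,\varepsilon,\zeta)}r'$ with $\varepsilon\geq\gamma$, $\zeta\leq\delta$ and $\langle q',r'\rangle\in S_2$; transitivity of the order on the residuated lattice gives $\varepsilon\geq\alpha$ and $\zeta\leq\beta$, and $\langle p',r'\rangle\in S_1;S_2$ by construction. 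The corresponding statement for bisimulations is even simpler, because Definition~\ref{bisimulation} requires matching with the \emph{same} weights $(a,\alpha,\beta)$, so there are no inequalities to chain and one just verifies the two clauses in turn. Finally, I would note that the converse $B^{-1}=\{\langle q,p\rangle \mid \langle p,q\rangle\in B\}$ of a bisimulation is again a bisimulation, which is immediate since taking converses merely swaps the two clauses of Definition~\ref{bisimulation}.

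With these three facts the conclusion is routine. For $\lesssim$: reflexivity comes from $\Delta_W$ being a simulation; for transitivity, if $p\lesssim q$ is witnessed by $S_1$ and $q\lesssim s$ by $S_2$, then $S_1;S_2$ is a simulation containing $\langle p,s\rangle$, so $p\lesssim s$; hence $\lesssim$ is a preorder. For $\sim$: reflexivity from $\Delta_W$ being a bisimulation; symmetry because $B$ witnessing $p\sim q$ makes $B^{-1}$ witness $q\sim p$; and transitivity exactly as for $\lesssim$, using closure of bisimulations under composition; hence $\sim$ is an equivalence relation.

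I do not expect a real obstacle here. The only point that deserves a moment of care is the composition step for simulations, where the two chains of weight inequalities, $\alpha\leq\gamma\leq\varepsilon$ and $\beta\geq\delta\geq\zeta$, must be combined using only transitivity of the underlying order of $\Sigma$ and none of its algebraic operations; consequently the whole argument is independent of the particular residuated lattice chosen.
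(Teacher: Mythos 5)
Your proof is correct and is precisely the standard argument (diagonal relation for reflexivity, relational composition for transitivity, converse for symmetry) that the paper itself invokes by deferring to the classical case; the only PLTS-specific point, chaining the weight inequalities via transitivity of the lattice order, is exactly the one you flag. Since the paper gives no details beyond citing the classical proof, your write-up is a faithful expansion of the intended argument rather than a different route.
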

\begin{proof} The proof is similar to one for classical labelled transition systems (details in \cite{Ana21}).
\end{proof}

As usual, a \emph{trace} from a given state $w$ in a PLTS $T$ is simply the sequence $s$ of tuples $(a, \alpha, \beta)$ labelling a path in $T$ starting at $w$. A first projection on such a sequence, i.e. $\pi_1^* (s)$ retrieves the corresponding sequence of labels
that constitutes what may be called an \emph{unweighted trace}. More interesting is the notion of \emph{weighted trace} which appends to the sequence of labels,  the maximum value for the positive accessibility relation and the minimum value for the negative accessibility relation computed along the trace $s$. Formally,

\begin{definition} 
Given a trace $s$ in a PLTS $T$, the corresponding \emph{weighted trace} is defined by
$$
tw(s)\; =\; \langle \pi_1^*, \bigwedge(\pi_2^*), \bigvee(\pi_3^*) \rangle\, (s)
$$
where, $\pi_n$ denotes the $n$ projection in a tuple, $\langle f,g,h \rangle$ is the universal arrow to a Cartesian product, 
$f^*$ is the functorial extension of $f$ to sequences over its domain, and $\bigwedge$ (resp. $\bigvee$) are the distributed version of $\wedge$ (resp. $\vee$) over sequences.
\end{definition}

\begin{definition}\label{subtrace} A weighted trace   $t=\langle [a_1,a_2,...,a_m], \alpha, \beta \rangle$ is a
\textbf{weighted subtrace} of $t'=\langle [b_1,b_2,...,b_n], \gamma, \delta \rangle$  if 
\begin{mylist} 
\item sequence $ [a_1,a_2,...,a_m]$ is a prefix of $[b_1,b_2,...,b_n]$,
\item $\gamma \geq \alpha$ and
\item $\; \delta \leq \beta$.
\end{mylist}
The definition lifts to sets as follows: given two sets $X$ and $Y$ of weighted traces,
$$X \subseteqw Y\; \; \text{iff}\; \; \forall_{t\in X} .  \exists_{ t'\in Y} . \; t \text{ is a weighted subtrace of } t'$$
\end{definition}

\begin{example} Consider again the two PLTSs given in Example \ref{simeg}. The weighted traces from $w_1$ are $\{t_1=\langle [a,b], 0.2,0.8\rangle, t_2=\langle [a,c] ,0.2,0.9\rangle\}$ and the ones from $v_1$ are $\{t_1'= \langle[a,b], 0.5,0.5\rangle, t_2'=\langle[a,c] ,0.5,0.5\rangle\}$. Clearly, $t_1$  (resp. $t_2$) is a weighted  subtrace of $t_1'$ (resp. $t_2'$).
\end{example}

\begin{lemma}\label{simtraces} Consider two PLTSs, $T_1=\langle W_1, R_1 \rangle$ and $T_2= \langle W_2, R_2\rangle$. If two states $p\in W_1$ and $q\in W_2$ are similar (resp. bisimilar), \textit{i.e.}, $p \lesssim q$ (resp. $p \sim q$), then the set of weighted traces from $p$, $X$, and the set of weighted traces from $q$, $Y$, are such that $X\subseteqw Y$ (resp. coincide).
\end{lemma}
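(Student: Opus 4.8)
The plan is to prove the simulation case first and then observe that the bisimulation case follows by applying the simulation result in both directions. So fix a simulation $S$ with $\langle p, q\rangle \in S$, and let $t = \langle \pi_1^*(s), \bigwedge \pi_2^*(s), \bigvee \pi_3^*(s)\rangle$ be an arbitrary weighted trace arising from a trace $s$ starting at $p$. I would argue by induction on the length of $s$ (equivalently, on the length of the underlying path in $T_1$) that there is a trace $s'$ starting at $q$ whose label sequence $\pi_1^*(s')$ has $\pi_1^*(s)$ as a prefix — in fact here we can take it to have the same length — and whose componentwise weights dominate those of $s$ in the required direction: each positive weight along $s'$ is $\geq$ the corresponding positive weight along $s$, and each negative weight along $s'$ is $\leq$ the corresponding negative weight along $s$. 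The base case (empty path) is trivial since the empty trace from $q$ works. For the induction step, suppose $s = (a,\alpha,\beta)\cdot s_0$ corresponding to $p \xrightarrow{(a,\alpha,\beta)}_{T_1} p'$ followed by a trace $s_0$ from $p'$; by the simulation clause there is $q' \in W_2$ and $\gamma \geq \alpha$, $\delta \leq \beta$ with $q \xrightarrow{(a,\gamma,\delta)}_{T_2} q'$ and $\langle p', q'\rangle \in S$, and the induction hypothesis applied to $s_0$ from $p'$ (using $\langle p', q'\rangle \in S$) yields the tail $s_0'$ from $q'$. Prepending $(a,\gamma,\delta)$ gives the desired $s'$.

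Next I would turn this path-level domination into the statement about weighted traces. The point is that $\bigwedge$ and $\bigvee$ are monotone: if $\pi_2^*(s')$ dominates $\pi_2^*(s)$ pointwise (and they have the same length, or $\pi_2^*(s')$ is at least as long), then $\bigwedge \pi_2^*(s') \geq \bigwedge \pi_2^*(s)$, and dually $\bigvee \pi_3^*(s') \leq \bigvee \pi_3^*(s)$. Combined with $\pi_1^*(s)$ being a prefix of $\pi_1^*(s')$, this is precisely the definition of $t$ being a weighted subtrace of $t' = tw(s')$ (Definition \ref{subtrace}). Since $t \in X$ was arbitrary and $t' \in Y$, we conclude $X \subseteqw Y$.

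For the bisimulation case, let $B$ be a bisimulation with $\langle p,q\rangle \in B$. A bisimulation is in particular a simulation in which one may always take $\gamma = \alpha$ and $\delta = \beta$ (the forward clause), and $B^{-1}$ is a simulation from $T_2$ to $T_1$ with the analogous property (the backward clause). Running the argument above for $B$ gives, for every trace $s$ from $p$, a trace $s'$ from $q$ with the \emph{same} label sequence and \emph{identical} weights at every position, hence $tw(s) = tw(s')$; symmetrically every weighted trace from $q$ is realized exactly from $p$. Therefore $X = Y$.

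I expect the only subtle point to be bookkeeping in the induction: one must be careful that the simulation clause as stated lets the target path have the same length as the source path (it does — it produces exactly one matching transition per source transition), so that the "prefix" in Definition \ref{subtrace}(i) and the pointwise comparison of the weight sequences in (ii)--(iii) line up. There is also a minor harmless discrepancy to acknowledge, namely that Lemma \ref{simtraces} writes the PLTSs as $\langle W_i, R_i\rangle$ without the label component $\Pi$; this is purely notational and does not affect the argument.
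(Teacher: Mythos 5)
Your proof is correct and follows essentially the same route as the paper's: match the trace step by step via the simulation, observe that the weights dominate pointwise, and use monotonicity of $\bigwedge$ and $\bigvee$ to obtain the weighted-subtrace relation, with the bisimulation case obtained by running the argument in both directions with equal weights. In fact your explicit induction on the length of the path supplies the detail the paper glosses over when it simply asserts that $\alpha'_i \geq \alpha_i$ and $\beta'_i \leq \beta_i$ hold along the whole trace because $(p,q)$ belongs to a simulation.
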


\begin{proof}
If $p \lesssim q$  each trace $t$ from $p$ is a prefix of trace  $t'$ from $q$.
Let $[\alpha_1,\alpha_2,...,\alpha_m]$ and $[\beta_1,\beta_2,..., \beta_m]$ be the sequences of positive and negative weights associated to $t$. Similarly, let $[\alpha'_1,\alpha'_2,...,\alpha'_n]$ and $[\beta'_1,\beta'_2,..., \beta'_n]$ be the corresponding sequences for $t'$; of course $m \leq n$. As $(p,q)$ belongs to a simulation, $\alpha'_i \geq \alpha_i$ and
$\beta'_i \leq \beta_i$, for all $i \leq n$. So, $Min[\alpha'_1,\alpha'_2,...,\alpha'_m] \geq Min[\alpha_1,\alpha_2,...,\alpha_m]$ and 
$Max[\alpha'_1,\alpha'_2,...,\alpha'_m] \leq Max[\alpha_1,\alpha_2,...,\alpha_m]$. Note that $Min$ and $Max$ correspond to 
$\bigwedge$ and $\bigvee$ in a G\"odel algebra. 
Thus,
$$\langle t, Min [\alpha_1,\alpha_2,...,\alpha_n], Max[\alpha_1,\alpha_2,...,\alpha_n] \rangle$$ is a weighted subtrace of
$\langle t'|_m, Min[\alpha'_1,\alpha'_2,...,\alpha'_n], Max[\alpha'_1,\alpha'_2,...,\alpha'_n] \rangle$, where 
$t'|_m$ is the subsequence of $t$ with $m$ elements.
The statement for $\sim$ follows similarly.
\end{proof}
Note that the converse of this lemma does not hold, as shown by the following counterexample.

\begin{example} Consider the PLTS depicted below.
\begin{center}
\scalebox{.7}{
\begin{tikzpicture}[->]
	\node[state] (w_1) {$w_1$};
	\node[state] (w_2) [right = of w_1] {$w_2$};
	\node[state] (w_3) [right= of w_2] {$w_3$};
	\draw (w_1) edge[bend left, above] node{$(a,0.5,0.3)$} (w_2);
	\draw (w_2) edge[bend left, above] node{$(b,0.7,0.2)$} (w_3);
\end{tikzpicture}
\qquad
\begin{tikzpicture}[->]
	\node[state] (v_1) {$v_1$};
	\node[state] (v_2) [right = of v_1] {$v_2$};
	\node[state] (v_3) [right= of v_2] {$v_3$};
	\draw (v_1) edge[bend left, above] node{$(a,0.7,0.2)$} (v_2);
	\draw (v_2) edge[bend left, above] node{$(b,0.5,0.3)$} (v_3);
\end{tikzpicture}
}
\end{center}
 $X = \{\langle [a], 0.5, 0.3 \rangle, \langle [a,b], 0.5, 0.3 \rangle\}$  is the set of weighted traces from $w_1$. Similarly,\\ 
 $Y =  \{\langle [a], 0.7, 0.2 \rangle, \langle [a,b], 0.5, 0.3 \rangle\}$ is the corresponding set from $w_2$.
Clearly $\langle [a], 0.5, 0.3 \rangle$  is a weighted subtrace of $\langle [a], 0.7, 0.2 \rangle$. 
Thus $X \subseteqw Y$. However, $w_1 \not\lesssim w_2$.

\end{example}

\section{New PLTS from old}\label{sc:new}

New PLTS can be built compositionally. This section introduces the relevant operators by exploring the structure of the category of 
$\Pcat$ of \emph{pointed} PLTS, i.e. whose objects are PLTSs with  a distinguished initial state, i.e. 
$\langle W, i, R, \Pi \rangle$, where $\langle W, R, \Pi \rangle$ is a PLTS and $i \in W$. Arrows in $\Pcat$ are allowed between PLTSs with different sets of labels, therefore generalizing Definition \ref{morph} as follows:

\begin{definition} Let $T_1=\langle W_1, i_1, R_1, \Pi\rangle$ and $T_2 =\langle W_2,i_2,R_2,\Pi '\rangle$ be two pointed PLTSs. A morphism in  $\Pcat$ from $T_1$ to $T_2$ is a pair of functions ($\sigma : W_1\to W_2$, $\lambda : \Pi\to_{\bot}\Pi'$) such that\footnote{Notation $\lambda:\Pi\to_{\bot}\Pi' $ stands for the totalization of a partial function by mapping to $\bot$ all elements of $\Pi$ for which the function is undefined.}
$\sigma(i_1)=i_2$,
and, if $(w,a,w',\alpha,\beta)\in R_1$ then $(\sigma(w),\lambda(a),\sigma(w'),\alpha',\beta')\in {R_2}^{\bot}$, with $\alpha \leq \alpha'$ and $\beta'\leq\beta$, where, for an accessibility relation $R$,   $R^{\bot}=R \cup \{(w,\bot,w,1,0) \mid w \in W\}$ denotes $R$ enriched with \emph{idle} transitions in each state.
\end{definition}

\noindent
Clearly $\Pcat$ forms a category, with composition inherited from $\setcat$ and $\setcat_{\bot}$, the later standing for the category of sets and partial functions, with $T_{\textit{nil}}=\langle \{*\},*,\emptyset,\emptyset \rangle$ as both the initial and final object. The corresponding unique morphisms are $!:T\to T_\text{nil}$, given by $\langle \underline{*}, ()\rangle$, and $?:T_{\textit{nil}}\to T$,
given by $\langle \underline{i}, ()\rangle$, where $()$ is the empty map and notation $\underline{x}$ stands for the constant, everywhere $x$, function.

An algebra of PLTS typically includes some form of parallel composition, disjoint union, restriction, relabelling and prefixing, as one is used from the process algebra literature \cite{BBR10}. Accordingly, these operators are defined along the lines proposed by G. Winskel and M. Mielsen \cite{WN95}, for the standard, more usual case.

%
%
\paragraph{Restriction.}
The restriction operator is intended to control the interface of a transition system, preserving, in the case of a PLTS, the corresponding positive and negative weights. Formally,

\begin{definition} Let $T=\langle W,i,R,\Pi\rangle$ be a PLTS, and  $\lambda : \Pi' \to \Pi$ be an inclusion.
The \textbf{restriction}  of $T$ to $\lambda$, $T\upharpoonright \lambda$, is a PLTS $\langle W,i,R',\Pi'\rangle$ over $\Pi'$ such that 
$R'=\{(w,\pi,w',\alpha,\beta)\in R \mid \pi\in\Pi'\}$.
\end{definition}

There is a morphism $f=(1_W,\lambda)$ from  $T\upharpoonright \lambda$ to $T$, 
and a functor $P:\Pcat \to \setcat_{\bot}$  which sends a morphism $(\sigma,\lambda):T\to T'$ to the partial function $\lambda:\Pi'\to\Pi$. Clearly, $f$ is the Cartesian lifting of morphism $P(f)=\lambda$ in $\setcat_{\bot}$. Being Cartesian means that for 
any $g:T'\to T$ in $\Pcat$ such that $P(g)=\lambda$ there is a unique morphism $h$ such that $P(h)=1_{\Pi '}$ making the following diagram to commute:
\begin{center}
\begin{tikzpicture}[->]
	\node[] (t1) {$T'$};
	\node[] (t2) [below = of t1] {$T\upharpoonright \lambda$};
	\node[] (t3) [right= of t2] {$T$};
	\draw (t1) edge[left] node{$h$} (t2);
	\draw (t1) edge[above] node{$g$} (t3);
	\draw (t2) edge[below] node{$f$} (t3);
\end{tikzpicture}
\end{center}
Note that, in general, restriction does not preserve reachable states. Often, thus, the result of a restriction is itself restricted to its reachable part.

\paragraph{Relabelling.}
In the same group of \emph{interface-modifier} operators, is \emph{relabelling}, which renames the labels of a PLTS according to 
a total function $\lambda:\Pi\to\Pi'$.

\begin{definition} Let $T=\langle W,i,R,\Pi\rangle$ be a PLTS, and  $\lambda : \Pi' \to \Pi$ be a total function.
The \textbf{relabelling}  of $T$ according to $\lambda$, $T\{\lambda\}$ is the PLTS $\langle W,i,R',\Pi ' \rangle$ where 
$R' = \{(w,\lambda(a),w',\alpha,\beta)\mid (w,a,w',\alpha,\beta)\in R \}$.
\end{definition}

%
%

Dually to the previous case, there is a morphism $f=(1_W,\lambda)$ from  $T$ to  $T\{\lambda\}$ which is the  cocartesian lifting of $\lambda$ ($= P(f)$).

%
%
%

\paragraph{Parallel composition.}

The product of two PLTSs combines their state spaces and includes all \emph{synchronous} transitions, triggered by the simultaneous occurrence of an action of each component, as well as \emph{asynchronous} ones in which a transition in one component is paired with an \emph{idle} transition, labelled by $\bot$, in the other. Formally,

\begin{definition}\label{pltsprod} Let $T_1=\langle W_1,i_1,R_1,\Pi_1\rangle$ and $T_2=\langle W_2,i_2,R_2,\Pi_2\rangle$ be two PLTS. Their \textbf{parallel composition} $T_1\times T_2$ is the PLTS $\langle W_1\times W_2,(i_1,i_2),R,\Pi '\rangle$, such that 
$\Pi'=\Pi_1\times_{\bot}\Pi_2 = \{(a,\bot)\mid a\in\Pi_1\} \cup \{(\bot,b)\mid b\in\Pi_2\}\cup \{(a,b)\mid a \in \Pi_1,b\in\Pi_2\}$, and
 $(w,a,w',\alpha,\beta)\in R$ if and only if $(\pi_1(w),\pi_1(a),\pi_1(w'),\alpha_1,\beta_1) \in {R_1}^{\bot}$, 
$(\pi_2(w),\pi_2(a),\pi_2(w'),\alpha_2, \beta_2) \in  {R_2}^{\bot}$, $\alpha=min(\alpha_1,\alpha_2)$ and $\beta=max(\beta_1,\beta_2)$.
\end{definition}

\begin{lemma}
Parallel composition is the product  construction in $\Pcat$.
\end{lemma}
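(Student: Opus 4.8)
## Proof Strategy

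The plan is to verify that the PLTS $T_1 \times T_2$ of Definition \ref{pltsprod}, together with the evident projection morphisms, satisfies the universal property of a categorical product in $\Pcat$. First I would exhibit the two projection morphisms $p_i = (\pi_i, \varpi_i) : T_1 \times T_2 \to T_i$, where $\pi_i : W_1 \times W_2 \to W_i$ is the set-theoretic projection on states and $\varpi_i : \Pi' \to_\bot \Pi_i$ is the partial function picking out the $i$-th component of a pair of actions (undefined, hence sent to $\bot$, on pairs whose $i$-th component is $\bot$). I must check these are genuine $\Pcat$-morphisms: the point is preserved since $\pi_i(i_1,i_2) = i_i$, and the weight conditions $\alpha \le \alpha_i$, $\beta_i \le \beta$ follow immediately from $\alpha = \min(\alpha_1,\alpha_2)$ and $\beta = \max(\beta_1,\beta_2)$, reading the defining clause of $R$ together with the idle-transition enrichment $R_i^\bot$.

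Next, given any pointed PLTS $T = \langle W, i, R, \Pi \rangle$ with morphisms $f_i = (\sigma_i, \lambda_i) : T \to T_i$ for $i = 1,2$, I would define the mediating morphism $\langle f_1, f_2 \rangle = (\langle \sigma_1, \sigma_2 \rangle, \langle \lambda_1, \lambda_2 \rangle)$, where $\langle \sigma_1, \sigma_2 \rangle : W \to W_1 \times W_2$ is the usual pairing in $\setcat$ and $\langle \lambda_1, \lambda_2 \rangle : \Pi \to_\bot \Pi_1 \times_\bot \Pi_2$ sends $a$ to $(\lambda_1 a, \lambda_2 a)$ — which indeed lands in $\Pi'$ as defined, since at least one coordinate is non-$\bot$ unless $a$ itself is sent to $\bot$ by both, in which case it maps to $\bot$. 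The commutation $p_i \circ \langle f_1, f_2 \rangle = f_i$ is then inherited componentwise from the analogous universal properties in $\setcat$ and $\setcat_\bot$, and uniqueness likewise: any mediating morphism must agree with $f_i$ after post-composition with $p_i$, which pins down both components.

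The one genuinely PLTS-specific step — and the place I expect the only real friction — is checking that the pairing $\langle f_1, f_2 \rangle$ actually \emph{is} a $\Pcat$-morphism, i.e. that it respects the weighted transition structure of $T_1 \times T_2$. Concretely, given $(w, a, w', \alpha, \beta) \in R$, I need $(\langle \sigma_1,\sigma_2\rangle(w),\, \langle \lambda_1,\lambda_2\rangle(a),\, \langle \sigma_1,\sigma_2\rangle(w'),\, \alpha'', \beta'') \in R^\bot$ with $\alpha \le \alpha''$ and $\beta'' \le \beta$. From $f_i$ being a morphism I get, for each $i$, a transition $(\sigma_i w, \lambda_i a, \sigma_i w', \alpha_i', \beta_i') \in R_i^\bot$ with $\alpha \le \alpha_i'$ and $\beta_i' \le \beta$. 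By the defining clause of the product these assemble into a transition of $T_1 \times T_2$ with positive weight $\alpha'' = \min(\alpha_1', \alpha_2')$ and negative weight $\beta'' = \max(\beta_1', \beta_2')$; then $\alpha \le \min(\alpha_1',\alpha_2') = \alpha''$ and $\beta'' = \max(\beta_1',\beta_2') \le \beta$ as required. A small subtlety to handle carefully here is the bookkeeping of idle transitions: when $\lambda_i a = \bot$ the relevant component transition is the idle loop $(\sigma_i w, \bot, \sigma_i w, 1, 0)$, and one must confirm the product's definition, read through ${R_i}^\bot$, produces exactly the expected tuple (in particular that the target state coordinates match, forcing $\sigma_i w = \sigma_i w'$ in that coordinate, consistent with $w, w'$ agreeing there). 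Once this is in place, together with the preservation of the initial state, $\langle f_1, f_2 \rangle$ is a morphism and the universal property is established, so $T_1 \times T_2$ is the product in $\Pcat$.
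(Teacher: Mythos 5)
Your proposal is correct and follows essentially the same route as the paper: the mediating morphism is the pairing $(\langle\sigma_1,\sigma_2\rangle,\langle\lambda_1,\lambda_2\rangle)$ inherited from $\setcat$ and $\setcat_\bot$, and the only PLTS-specific work is checking the weight inequalities via $\alpha\leq\min(\alpha_1',\alpha_2')$ and $\max(\beta_1',\beta_2')\leq\beta$. You are in fact somewhat more careful than the paper, which does not explicitly verify that the projections are morphisms nor spell out the idle-transition bookkeeping when some $\lambda_i(a)=\bot$.
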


\begin{proof}
 In the diagram below let $g_i=(\sigma_i,\lambda_i)$, for $i = 1, 2$, and define  $h$ as  $h = (\langle \sigma_1, \sigma_2 \rangle, \langle \lambda_1, \lambda_2 \rangle)$, where
 $\langle f_1, f_2\rangle(x) = (f_1(x), f_2(x))$ is the  universal arrow in a product diagram in $\setcat$. Clearly, $h$ lifts universality to $\Pcat$, as the unique arrow making the diagram to commute. It remains show it is indeed an arrow in the category. 
 Indeed, 
 let $T=\langle W,i,R,\Pi\rangle$, $T_1=\langle W_1,i_1,R_1,\Pi_1\rangle$, and define 
$T_1\times T_2=\langle W_1\times W_2,(i_1,i_2),R',\Pi'\rangle$ according to defintion \ref{pltsprod}. 
Thus, for each 
$(w,a,w',\alpha,\beta) \in R$, there is a transition $(\sigma_1(w),\lambda_1(a),\sigma_1(w'),\alpha_1,\beta_1) \in {R_1}^{\bot}$ such that $\alpha\leq\alpha_1$ and $\beta \geq\beta_1$; and also a transition $(\sigma_2(w),\lambda_2(a),\sigma_2(w'),\alpha_2,\beta_2) \in {R_2}^{\bot}$ such that $\alpha\leq\alpha_1$ and $\beta\geq\beta_2$.
Moreover, there is a transition 
$$(\langle\sigma_1,\sigma_2\rangle (w),\langle\lambda_1,\lambda_2\rangle (a), \langle\sigma_1,\sigma_2\rangle (w'), min(\alpha_1,\alpha_2),max(\beta_1,\beta_2))\in R'$$
 Thus, there is a transition $(\langle\sigma_1,\sigma_2\rangle (w),\langle\lambda_1,\lambda_2\rangle (a), \langle\sigma_1,\sigma_2\rangle (w'),\alpha',\beta'))\in R'$,  for any $(w,a,w',\alpha,\beta) \in R$ ,  such that $\alpha\leq\alpha'$ and $\beta\geq\beta'$. Furthermore, $\langle \sigma_1,\sigma_2\rangle (i) = (\sigma_1(i),\sigma_2(i))=(i_1,i_2)$. This establishes $h$ as a $\Pcat$ morphism.
 \begin{center}
\begin{tikzpicture}[->]
	\node[] (T1) {$T_1$};
	\node[] (T2) [right = of T1] {$T_1\times T_2$};
	\node[] (T3) [right = of T2] {$T_2$};
	\node[] (T4) [below = of T2] {$T$};
	\draw (T2) edge[above] node{$\Pi_1$} (T1);
	\draw (T2) edge[above] node{$\Pi_2$} (T3);
	\draw (T4) edge[right] node{$h$} (T2);
	\draw (T4) edge[below] node{$g_1$} (T1);
	\draw (T4) edge[below] node{$g_2$} (T3);
\end{tikzpicture}
\end{center}
\end{proof}

\begin{example}\label{egprodplts} Consider the two PLTSs, $T_1$ and $T_2$, depicted below.
\begin{center}
\scalebox{.7}{
\begin{tikzpicture}[->]
	\node[state] (i_1) {$i_1$};
	\node[state] (w) [right = of w_1] {$w$};
	\draw (i_1) edge[bend left, above] node{$(a,0.7,0.2)$} (w);
\end{tikzpicture}
\qquad
\begin{tikzpicture}[->]
	\node[state] (i_2) {$i_2$};
	\node[state] (v) [right = of v_1] {$v$};
	\draw (i_2) edge[bend left, above] node{$(b,0.4,0.2)$} (v);
\end{tikzpicture}
}
\end{center}
Their product $T$ is the PLTS 
\begin{center}
\scalebox{.7}{
\begin{tikzpicture}[->]
	\node[] (i_2) {$(i_1,i_2)$};
	\node[] (v) [right = 3.5cm of v_1] {$(w,i_2)$};
	\node[] (v2) [below = 3.5cm of v] {$(w,v)$};
	\node[] (v3) [below = 3.5cm of i_2] {$(i_1,v)$};
	\draw (i_2) edge[above] node{$((a,\bot),0.7,0.2)$} (v);
	\draw (v) edge[right] node{$((\bot,b),0.4,0.6)$} (v2);
	\draw (i_2) edge[above] node[rotate=-43]{$((a,b),0.4,0.2)$} (v2);
	\draw (i_2) edge[left] node{$((\bot,b),0.4,0.6)$} (v3);	
	\draw (v3) edge[below] node{$((a,\bot),0.7,0.2)$} (v2);
\end{tikzpicture}
}
\end{center}
\end{example}

A suitable combination of parallel composition and restriction may enforce different synchronization disciplines. 
For example, \emph{interleaving} or \emph{asynchronous product} $T_1\interleave T_2$ is defined as
$(T_1\times T_2) \upharpoonright \lambda$
 with the inclusion $\lambda:\Pi\to\Pi_1\times_{\bot}\Pi_2$ for 
$\Pi=\{(a,\bot)\mid a\in\Pi_1\}\cup\{(\bot,b)\mid b\in \Pi_2\}$. This
results in a PLTS $\langle W_1\times W_2,(i_1,i_2),R,\Pi\rangle$ such that 
 $R=\{(w,a,w',\alpha,\beta)\in R'\mid a\in \Pi\}$.
 
Similarly, the \emph{synchronous product} $T_1\otimes T_2$ is also defined as $(T_1\times T_2) \upharpoonright \lambda$, 
taking now $\Pi=\{(a,b)\mid a\in\Pi_1 \text{ and } b \in \Pi_2\}$ as the domain of $\lambda$.

\begin{example} 
Interleaving and synchronous product of $T_1$ and $T_2$ as in Example \ref{egprodplts}, are depicted below.
\begin{center}
\begin{tabular}{cc}
 \scalebox{.7}{
\begin{tikzpicture}[->]
	\node[] (i_2) {$(i_1,i_2)$};
	\node[] (v) [right = 3.5cm of v_1] {$(w,i_2)$};
	\node[] (v2) [below = 3.5cm of v] {$(w,v)$};
	\node[] (v3) [below = 3.5cm of i_2] {$(i_1,v)$};
	\draw (i_2) edge[above] node{$((a,\bot),0.7,0.2)$} (v);
	\draw (v) edge[right] node{$((\bot,b),0.4,0.6)$} (v2);
	\draw (i_2) edge[left] node{$((\bot,b),0.4,0.6)$} (v3);	
	\draw (v3) edge[below] node{$((a,\bot),0.7,0.2)$} (v2);
\end{tikzpicture}
}
&
\scalebox{.7}{
\begin{tikzpicture}[->]
	\node[] (i_2) {$(i_1,i_2)$};
	\node[] (v) [right = 3.5cm of v_1] {$(w,i_2)$};
	\node[] (v2) [below = 3.5cm of v] {$(w,v)$};
	\node[] (v3) [below = 3.5cm of i_2] {$(i_1,v)$};
	\draw (i_2) edge[above] node[rotate=-43]{$((a,b),0.4,0.2)$} (v2);
\end{tikzpicture}
}\\
$T_1\interleave T_2$  & $T_1\otimes T_2$
\end{tabular}
\end{center}
\end{example}

\paragraph{Sum.}
The sum of two PLTSs corresponds to their non-determinisitic composition: the resulting PLTS behaves as either of its components. Formally, 

\begin{definition}\label{pltssum} Let $T_1=\langle W_1,i_1, R_1 ,\Pi_1\rangle$ and $T_2 = \langle W_2, i_2, R_2,\Pi_2\rangle $ be two PLTSs. Their sum $T_1+T_2$ is the PLTS $\langle W,(i_1,i_2),R,\Pi_1\cup \Pi_2 \rangle$, where
\begin{itemize}[noitemsep]
\item [--] $W=(W_1\times \{i_2\}) \cup ( \{i_1\}\times W_2)$ ,
\item [--] $t\in R$ if and only if there exists a transition $(w,a,w',\alpha,\beta) \in R_1$ such that $t=(\iota_1(w),a,\iota_1(w'),\alpha,\beta)$, or  a transition $(w,a,w',\alpha,\beta)\in R_2$ such that $t = (\iota_2(w),a,\iota_2(w'),\alpha,\beta)$
\end{itemize}
where $\iota_1$ and $\iota_2$ are the left and right injections associated to a coproduct in $\setcat$, respectively.
\end{definition}
 Sum is actually a coproduct in $\Pcat$ (the proof follows the argument used for the product case), making $T_1+T_2$ dual to $T_1 \times T_2$.

\begin{example}\label{egprodplts} 
The sum $T_1 + T_2$, for $T_1, T_2$ defined as in Example \ref{egprodplts}  is given by
\begin{center}
\scalebox{.7}{
\begin{tikzpicture}[->]
	\node[] (i_2) {$(i_1,i_2)$};
	\node[] (v) [right = 2cm of i_2] {$(w,i_2)$};
	\node[] (v3) [below = 2cm of i_2] {$(i_1,v)$};
	\draw (i_2) edge[above] node{$(a,0.7,0.2)$} (v);
	\draw (i_2) edge[left] node{$(b,0.4,0.6)$} (v3);	
\end{tikzpicture}
}
\end{center}

\end{example}

\paragraph{Prefixing.}

As a limited form of sequential composition, prefix appends to a pointed PLTS a new initial state and a new transition to the previous initial state, after which the system behaves as the original one. 

\begin{definition} Let $T=\langle W,i,R,\Pi \rangle$ be a PLTS and $w_{new}$ a fresh state identifier not in $W$. Given an action $a$, and $\alpha,\beta \in[0,1]$, the prefix $(a,\alpha,\beta)T$ is defined as $\langle W \cup  \{w_{new}\},w_{new},R',\Pi\cup\{a\}\rangle$ where
$R'= R \cup (w_{new},a, i, \alpha,\beta)$.
\end{definition}

Since it is not required that the prefixing label is distinct from the ones in the original system, prefixing does not extend to a functor in $\Pcat$, as illustrated in the counterexample below.
This is obviously the case for a category of classical labelled transition systems as well. In both cases, however, prefix extens to a functor if the corresponding categories are restricted to action-preserving morphisms, i.e. in which the action component of a morphism is always an inclusion

\begin{example}
Consider two pointed PLTS $T_1$ and $T_2$ 
\begin{center}
\scalebox{.7}{
\begin{tikzpicture}[->]
	\node[state] (i_1) {$i_1$};
	\node[state] (w) [right = of w_1] {$w$};
	\draw (i_1) edge[bend left, above] node{$(a,0.7,0.2)$} (w);
\end{tikzpicture}
}
\qquad
\scalebox{.7}{
\begin{tikzpicture}[->]
	\node[state] (i_2) {$i_2$};
	\node[state] (v) [right = of v_1] {$v$};
	\draw (i_2) edge[bend left, above] node{$(b,0.8,0.1)$} (v);
\end{tikzpicture}
}
\end{center}
connected by a morphism $(\sigma,\lambda):T_1\to T_2$ such that $\sigma(i_1)=i_2$, $\sigma(w)=v$ and $\lambda(a)=b$.
Now consider the prefixes $(a,1,0)T_1$ and $(a,1,0)T_2$ depicted below. 
\begin{center}
\scalebox{.7}{
\begin{tikzpicture}[->]
	\node[state] (i) {$i$};
	\node[state] (i_1) [right =of i] {$i_1$};
	\node[state] (w) [right = of i_1] {$w$};
	\draw (i) edge[bend left,above] node{$(a,1,0)$} (i_1);
	\draw (i_1) edge[bend left, above] node{$(a,0.7,0.2)$} (w);
\end{tikzpicture}
}
\qquad
\scalebox{.7}{
\begin{tikzpicture}[->]
	\node[state] (i) {$i'$};
	\node[state] (i_2) [right =of i]  {$i_2$};
	\node[state] (v) [right = of i_2] {$v$};
	\draw (i) edge[bend left,above] node{$(a,1,0)$} (i_2);
	\draw (i_2) edge[bend left, above] node{$(b,0.8,0.1)$} (v);
\end{tikzpicture}
}
\end{center}
Clearly, a mapping from the actions in $(a,1,0)T_1$ to the actions in $(a,1,0)T_1$ does not exist so neither exists a morphism between the two systems.
\end{example}

\paragraph{Functorial extensions.}
Other useful operations between PLTSs, typically acting on transitions' positive and negative weights, and often restricted to PLTSs over a specific residuated lattice, can be defined functorially in $\Pcat$. An example involving a PLTS defined over a G\"odel algebra is an operation that 
uniformly increases or decreases the value of the positive (or the negative, or both) weight in all transitions. 
Let
$$a\oplus b=\begin{cases} 1 \text{ if } a+b\geq 1\\0\text{ if } a+b\leq 0\\a+b\text{ otherwise } \end{cases}$$
Thus,
\begin{definition}  Let $T=\langle W,i,R,\Pi\rangle$ be a PLTS. Taking $v\in [-1,1]$, the \textbf{positive $v$-approximation} ${T_{\oplus}}^+_v$ is a PLTS $\langle W,i,R',\Pi\rangle$ where 
\begin{center} $R'=\{(w,\pi,w',\alpha \oplus v,\beta) \mid (w,\pi,w',\alpha,\beta)\in R\}$.\end{center}
\end{definition}
\noindent
The definition extends to a functor in $\Pcat$ which is the identity in morphisms. Similar operations can be defined to act on the negative accessibility relation or both.

%
%
%
%
%

Another useful  operation removes all transitions in a pointed PLTS for which the positive accessibility relation is below a certain value and the negative accessibility relation is above a certain value. Formally, 

\begin{definition} Let $T=\langle W,i,R,\Pi\rangle$ be a pointed PLTS, and $p, n\in [0,1]$. The \textbf{purged} PLTS $T_{{p}\uparrow \downarrow{n}} $ is defined as $\langle W,i,R',\Pi\rangle$ where 
$$R'=\{(w,\pi,w',\alpha,\beta) \mid (w,\pi,w',\alpha,\beta)\in R\text{ and } \alpha \geq p \text{ and } \beta\leq m \}$$
\end{definition} 
Clearly, the operation extends to a functor in $\Pcat$, mapping morphisms to themselves.

\section{An application to quantum circuit optimization}\label{sc:qu}

In a quantum circuit \cite{NC10} decoherence consists in decay of a qubit in superposition to its ground state and may be caused by distinct physical phenomena.
A quantum circuit is effective only if gate operations and measurements are performed to superposition states within a limited period of time after their preparation.
In this section pointed PLTS will be used to model circuits incorporating qubit decoherence as an error factor. 
Typically, coherence is specified as an interval corresponding to a worst and a best case.   We employ the two accessibility relations in a PLTS to model both scenarios simultaneously. 

An important observation for the conversion of quantum circuits to PLTS is that quantum circuits always have a sequential execution. Simultaneous operations performed to distinct qubits are combined using the tensor product  $\otimes$ into a single operation to the whole collection of qubits which forms the state of  the circuit. The latter is described by a sequence of executions $e_1,e_2,e_3,...$ where each $e_i$ is the tensor product of the operations performed upon the state at each step. 
The conversion to a PLTS is straightforward, labelling each transition by the tensor of the relevant gates $O_1 \otimes \cdots \otimes O_m$, for $m$ gates involved, but for the computation of 
the positive and negative accessibility relations, $r^+$ and $r^-$.

The weights of a transition corresponding to the application of a gate $O$ acting over $n$ qubits $q_1$ to $q_n$ are given by
$$
v(O) =
\begin{cases} (
1,0) \text{ if qubits $q_1, \cdots q_n$ are in a definite state } \\ 
\left( \Max_i \, f_{\text{max}}(q_i),\Min_i\, f_{\text{min}}(q_i) \right) \text{ otherwise } 
\end{cases}
$$
\noindent
where $f_{\text{max}}(q)=\frac{\tau_{\text{max}}(q)-\tau_{\text{prep}}(q)}{100}$ and $f_{\text{min}}(q)=\frac{\tau_{\text{min}}(q)-\tau_{\text{prep}}(q)}{100}$, $\tau_{\text{max}}(q)$ and $\tau_{\text{min}}(q)$ are the longest and shortest coherence times of $q$, respectively, and $\tau_{\text{prep}}(q)$ is the time from the preparation of $q$'s superposition to the point after the execution of $O$.
The latter are fixed for each type of quantum gate; reference \cite{Zhang_2019} gives experimentally computed values for them as well as for maximum and minimum values for qubit decoherence.
   	 
Consider, now, a transition $t$ labelled by a $O_1\otimes ... \otimes O_m$    	 
   	 Then,
   	 $r^+ = \Max_{i=1}^n \{\pi_1(v(O_i))\}$ and
   	 $r^- = 1 - \Min_{i=1}^n\{\pi_2(v(O_i))\}$.

\begin{example} Consider the following circuits designed with IBM Quantum Composer:

\begin{center}
    \includegraphics[scale=0.6]{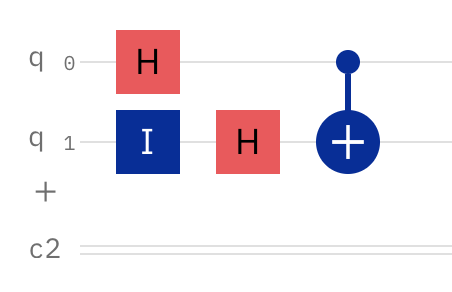}
    \qquad
    \includegraphics[scale=0.6]{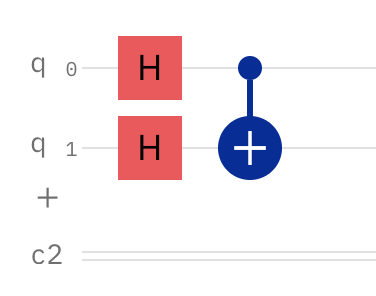}
\end{center}

Assume  that the execution time of a single qubit gate is $\tau_G= 20\mu s$ and of a two qubit gate is $2\tau_G=40 \mu s$ 
\cite{Zhang_2019}, and that both qubits have the same coherence times $\tau_{\text{max}}(q_1)=\tau_{\text{max}}(q_2)=100\mu s$ and $\tau_{\text{min}}(q_1)=\tau_{\text{min}}(q_2)=70\mu s$. Thus the circuit on the left (resp. right) translates into 
$T_1$ (on the left) and  $T_2$  (on the right).

\begin{center}
\begin{tabular}{cc}
\scalebox{.7}{
\begin{tikzpicture}[->]
    \node[] (1) {$q[0]  ; q[1] $};
    \node[] (2) [below = of 1] {$q[0]  ; q[1]$};
    \node[] (3) [below = of 2] {$q[0]  ; q[1]$};
    \node[] (4) [below = of 3] {$q[0]  ; q[1]$};
    \draw (1) edge[right] node{$(H\otimes I,1,0)$} (2);
    \draw (2) edge[right] node{$(I \otimes H,0.8,0.5)$} (3);
    \draw (3) edge[right] node{$(CNOT,0.4,0.9)$} (4);
\end{tikzpicture}
}
\qquad &
\scalebox{.7}{
\begin{tikzpicture}[->]
    \node[] (2) [below = of 1] {$q[0]  ; q[1]$};
    \node[] (3) [below = of 2] {$q[0]  ; q[1]$};
    \node[] (4) [below = of 3] {$q[0]  ; q[1]$};
    \draw (2) edge[right] node{$(H \otimes H,1,0)$} (3);
    \draw (3) edge[right] node{$(CNOT,0.6,0.7)$} (4);
\end{tikzpicture}
}
\end{tabular}
\end{center}

As both  circuits implement the same  quantum algorithm and our focus is only on the effectiveness of the circuits, we 
may abstract from the actual sequences of labels and consider instead
$T_1\{\lambda\}$ and $T_2\{\lambda\}$, for $\lambda$ mapping each label to a unique label 
$\star$.
Their maximal weighted traces \footnote{Such maximal traces are easily identifiable given the peculiar shape of a PLTS corresponding to a quantum circuit.} are 
$$
t_{T_1\{\lambda\}}\, =\,  \langle [*,*,*], 0.4,0.9\rangle\; \; \text{and}\; \; 
t_{T_2\{\lambda\}}\, =\,  \langle [*,*,*], 0.6,0.7\rangle
$$
Clearly  $t_{T_1\{\lambda\}}$ is a weighted subtrace of $t_{T_2\{\lambda\}}$, therefore suggesting a criteria for comparing  the effectiveness of circuits. Indeed, a circuit is more effective (i.e. less affected by qubit decoherence)  than other if the maximal weighted trace of its (relabelled)  PLTS is a weighted subtrace of the corresponding construction in the  other. 

The second circuit is obviously more efficient than the first. This suggests we could use the weighted subtrace relation as a metric to compare circuit quality, for circuits implementing equivalent algorithms. 
\end{example}

Reference \cite{Zhang_2019} introduces a tool which tried to transform a circuit so that  the lifetime of quantum superpositions is shortened. They give several examples of circuits and show how the application of the tool results in a circuit performing the same algorithm but with a reduced error rate. Our next example builds on one of their examples, computes the corresponding  PLTS
and compare the maximal weighted traces.

\begin{example} Consider the following circuits reproduced from  \cite{Zhang_2019},
 which in ideal quantum devices would be indistinguishable.
\begin{center}
\includegraphics[scale=0.7]{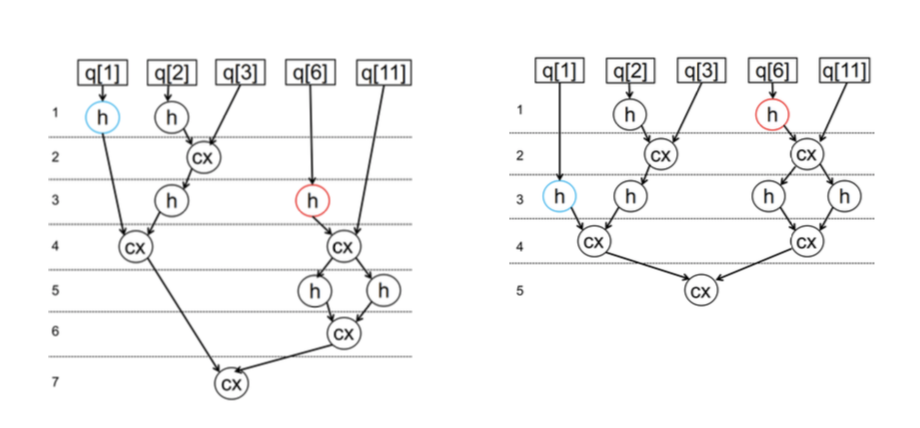}
\end{center}
These circuits are represented as

\begin{center}
\scalebox{.7}{
\begin{tikzpicture}[->]
    \node[] (1) {$s_1$};
    \node[] (2) [below = of 1] {$s_2$};
    \node[] (3) [below = of 2] {$s_3$};
    \node[] (4) [below = of 3] {$s_4$};
    \node[] (5) [below = of 4] {$s_5$};
    \node[] (6) [below = of 5] {$s_6$};
    \node[] (7) [below = of 6] {$s_7$};
    \node[] (8) [below = of 7] {$s_8$};
    \draw (1) edge[left] node{$(H_1 \otimes H_2 ,1,0)$} (2);
    \draw (2) edge[left] node{$(CX_{2,3},0.6,0.7)$} (3);
    \draw (3) edge[left] node{$(H_2 \otimes H_6,0.8,0.5)$} (4);
    \draw (4) edge[left] node{$(CX_{1,2} \otimes CX_{6,11},0,1)$} (5);
    \draw (5) edge[left] node{$( H_6 \otimes H_1,0.8,0.5)$} (6);
    \draw (6) edge[left] node{$(CX_{6,11},0.6,0.7)$} (7);
    \draw (7) edge[left] node{$(CX_{2,6} ,0,1)$} (8);
    
\end{tikzpicture}
\qquad
\begin{tikzpicture}[->]
    \node[] (1) {$r_1$};
    \node[] (2) [below = of 1] {$r_2$};
    \node[] (3) [below = of 2] {$r_3$};
    \node[] (4) [below = of 3] {$r_4$};
    \node[] (5) [below = of 4] {$r_5$};
    \node[] (6) [below = of 5] {$r_6$};
    \draw (1) edge[right] node{$(H_2\otimes H_6 ,1,0)$} (2);
    \draw (2) edge[right] node{$(CX_{2,3}\otimes CX_{6,11} ,0.6,0.7)$} (3);
    \draw (3) edge[right] node{$( H_1 \otimes H_2\otimes H_6 \otimes H_11,0.8,0.5)$} (4);
    \draw (4) edge[right] node{$(CX_{1,2} \otimes CX_{6,11},0.6,0.7)$} (5);
    \draw (5) edge[right] node{$(CX_{2,6},0.6,0.7)$} (6);
\end{tikzpicture}
}
\end{center}
where $H$ and $CX$ are indexed by the numeric identifiers of the qubit(s) to which they apply in each execution step. 
The maximal weighted trace of the (relabelled PLTS corresponding to) circuit in the right,
$\langle [*,*,*,*,*,*,*],0.6,0.7  \rangle$, is a weighted subtrace of the one
corresponding to circuit in the left, 
$\langle [*,*,*,*,*],0,1  \rangle$. Thus, the former circuit is more effective than the latter, as experimentally verified in \cite{Zhang_2019}.
\end{example}

\begin{example} As a final example consider two circuits differing only on the time points in which measurements are placed.  
\begin{center}
\includegraphics[scale=0.6 ]{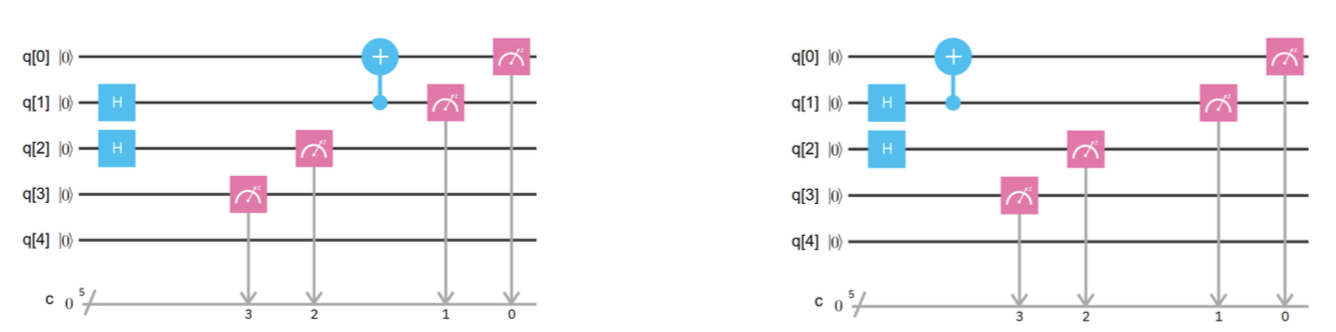}
\end{center}
The corresponding PLTS, computed again with the values given in reference  (where execution  time of a measurement is $\tau_M= 300ns \sim 1\mu s$), are depicted below

\begin{center}
\scalebox{.7}{
\begin{tikzpicture}[->]
    \node[] (1) {$s_1$};
    \node[] (2) [below = of 1] {$s_2$};
    \node[] (3) [below = of 2] {$s_3$};
    \node[] (4) [below = of 3] {$s_4$};
    \node[] (5) [below = of 4] {$s_5$};
    \node[] (6) [below = of 5] {$s_6$};
    \node[] (7) [below = of 6] {$s_7$};
    \draw (1) edge[left] node{$(H_1 \otimes H_2 ,1,0)$} (2);
    \draw (2) edge[left] node{$(M_3,0.99,0.31)$} (3);
    \draw (3) edge[left] node{$(M_2,0.98,0.32)$} (4);
    \draw (4) edge[left] node{$(CX_{0,1},0.58,0.72)$} (5);
    \draw (5) edge[left] node{$(M_1,0.99,0.31)$} (6);
    \draw (6) edge[left] node{$(M_0,0.98,0.32)$} (7);
\end{tikzpicture}
\qquad
\begin{tikzpicture}[->]
    \node[] (1) {$r_1$};
    \node[] (2) [below = of 1] {$r_2$};
    \node[] (3) [below = of 2] {$r_3$};
    \node[] (4) [below = of 3] {$r_4$};
    \node[] (5) [below = of 4] {$r_5$};
    \node[] (6) [below = of 5] {$r_6$};
    \node[] (7) [below = of 6] {$r_7$};
    \draw (1) edge[right] node{$(H_1\otimes H_2 ,1,0)$} (2);
    \draw (2) edge[right] node{$(CX_{0,1} ,0.6,0.7)$} (3);
    \draw (3) edge[right] node{$(M_3,0.99,31)$} (4);
    \draw (4) edge[right] node{$(M_2,0.98,0.32)$} (5);
    \draw (5) edge[right] node{$(M_1,0.97,0.33)$} (6);
    \draw (6) edge[right] node{$(M_0,0.96,0.34)$} (7);
\end{tikzpicture}
}
\end{center}
The maximal  weighted  trace $\langle [*,*,*,*,*,*,],0.6,0.7  \rangle$  corresponding to the circuit on the right is a weighted subtrace of the corresponding one for the circuit on the left, $\langle [*,*,*,*,*,*,], 0.58,0.72  \rangle$. This shows that measuring can be safely 
postponed to the end of a circuit, as experimentally verified.
\end{example}

\section{Conclusions and future work}\label{sc:conc}
The paper introduced a category of a new kind of labelled transition systems able to capture both \emph{vagueness}  and \emph{inconsistency} in software modelling scenarios. The structure of this category was explored to define a number of useful operators to build such systems in a compositional way. Finally, PLTS were used to model effectiveness concerns in the analysis of quantum circuits. 
In this case the  weight corresponding to the `presence' of a transition captures  an index measuring its effectiveness assuming the best case value for qubit decoherence. On the other hand, the  weight corresponding to  the `absence' of a transition measures the possibility of non-occurrence, assuming qubit decoherence worst case value.

A lot remains to be done. First of all, a process logic, as classically associated to labelled transition systems \cite{Sti05}, i.e. a modal logic  with label-indexed modalities, can be designed for pointed PTLS. This will  provide not only yet another behavioural equivalence, based on the set of formulas satisfied by two systems, but also a formal way to express safety and liveness properties of these  systems. 

This will be extremely useful to express and verify properties related to the effectiveness of quantum circuits, therefore pushing further the application scenario proposed in section \ref{sc:qu}.
Finally, automating the construction of a pointed PLTS for a given circuit, parametric on the different  qubit coherence and gate execution time found experimentally, and adding a prover for the logic suggested above, will provide an interesting basis to support quantum circuit optimization.
Reliable, mathematically sound approaches  and tools to   support                                                                         
quantum computer programming  and verification will be part of the quantum research agenda for the years to come. 
Indeed, their lack may put at risk the expected
quantum advantage of the new hardware.  

\bibliographystyle{splncs04}
\bibliography{dissertation}

\end{document}